\newtheorem{theorem}{Theorem}
\newtheorem{example}{Example}
\newtheorem{prop}{Proposition}
\newtheorem{definition}{Definition}
\theoremstyle{remark}
\newtheorem{remark}{Remark}
\date{}
\newcommand{\group}{{\mathcal G}}
\newcommand{\pell}{{\mathcal H}}
\newcommand{\conic}{{\mathcal C}}
\newcommand{\para}{{\mathcal P}}
\newcommand{\ec}{{\mathcal E}}
\newcommand{\ZZ}{{\mathbb Z}}
\newcommand{\KK}{{\mathbb K}}
\newcommand{\trap}{{\tau}}
\newcommand{\iso}{{\Phi}}
\DeclareMathOperator{\lcm}{lcm}
\providecommand{\keywords}[1]{\textbf{\textbf{Keywords:}} #1}
\begin{document}

\title{An efficient and secure RSA--like cryptosystem exploiting R\'{e}dei rational functions over conics}
\author{Emanuele Bellini, Nadir Murru}

\maketitle

\begin{abstract}
We define an isomorphism between
the group of points of a conic and
the set of integers modulo a prime equipped with a non-standard product.
This product can be efficiently evaluated through the use of R\'edei rational functions.
We then exploit the isomorphism
to construct a novel RSA-like scheme.
We compare our scheme with classic RSA and with RSA-like schemes based on the cubic or conic equation.
The decryption operation of the proposed scheme turns to be two times faster than RSA,
and involves the lowest number of modular inversions with respect to other RSA-like schemes based on curves.
Our solution offers the same security as RSA in a one-to-one communication and more security in broadcast applications.
\end{abstract}
\keywords{R\'edei function, RSA, public cryptography}

\section{Introduction}
RSA cryptosystem security is based on the existence of an one-way trapdoor function \cite{goldwasser1996lecture}, \cite{yao1982theory}, which is easy to compute and difficult to invert without knowing some information, i.e., the trapdoor.
Consider $N=pq$, where $p$ and $q$ are two primes of roughly the same size,
and $e$ an invertible element in $\mathbb{Z}_{\phi(N)}$ ($\phi(N)$ Euler totient function). 
Given the function $\trap_{N,e}(x) = x^e \pmod N$,
it is not known if there exists a probabilistic polynomial time algorithm in $N$
which can invert $\trap$, for any $x\in \mathbb{Z}_N^*$,
without knowing either $p,q,\phi(N)$ or the inverse $d$ of $e$ in $\mathbb{Z}_{\phi(N)}$.
Thus the pair $(N,e)$, called the public key, is known to anyone, 
while the triple $(p,q,d)$, called the secret key, 
is only known to the receiver of an encrypted message.
The ciphertext $C$ is obtained as $C=M^e \pmod N$, and
the original message $M$ is obtained with the exponentiation $M = C^d \pmod N$.

Factoring $N$ is the most common tried way for breaking RSA.
Some attacks are possible when 
either the private exponent $d$ is small \cite{wiener1990cryptanalysis}, 
or the public exponent $e$ is small \cite{coppersmith1996low}, \cite{coppersmith1997small}.
Several other methods, which exploit extra information leaked by erroneous implementations of the cryptosystem, exist (e.g., see \cite{lenstra2012ron}).

Beside one-to-one communication scenario, 
there also exist other cryptographic scenarios, such as broadcast applications, 
where RSA leaks additional vulnerabilities (e.g., see \cite{hastad1986n}).

RSA-like schemes (e.g., see \cite{koyama1992new}, \cite{koyama1995fast}, \cite{padhye2006public}) 
have been proposed in order to avoid some of these attacks. Some of these schemes turn to have also a faster decryption procedure. 
They are based on isomorphism between two groups, 
one of which is the set of points over a curve, usually a cubic or a conic.
%

In this work we present an improvement of such schemes. 
We provide a new RSA-like scheme based on isomorphism over a conic, 
with the fastest isomorphism with respect to the ones known by the authors.
Furthermore, our scheme uses a different set of conics 
with respect to that of \cite{padhye2006public}, 
which owns the fastest decryption procedure.
We provide security proofs and efficiency comparisons.
%

In Section \ref{sec:related} we give an overview of the use of conic equations in cryptography and 
we introduce RSA-like schemes based on isomorphism.
In Section \ref{sec:conic} we introduce a parametrization of certain conics and we introduce the isomorphism used for the cryptographic scheme.
In Section \ref{sec:scheme} we describe our scheme, highlighting connections with R\'{e}dei rational functions and differences with other schemes based on the Pell equation and Dickson polynomials.
In Section \ref{sec:security} we discuss security issues.
We prove that our scheme is as secure as RSA 
in an one-to-one communication scenario, 
while it is more secure in a broadcast scenario. 
We also show that our scheme is secure against partially known-plaintext attacks.
Finally, in Section \ref{sec:efficiency} we expose some efficiency considerations and comparisons 
with special focus on the decryption procedure, 
where our scheme presents the lowest computational complexity
if compared to all known similar schemes.
\section{Related works}\label{sec:related}
In this section we provide a quick overview of the use of conic equations to construct cryptographic protocols.
Then we define a generic RSA-like scheme based on an isomorphism between two groups.
\subsection{The use of conic equations in cryptography}\label{sec:pell_in_crypto}
The use of conic equations is not novel in cryptography, in particular to build RSA-like cryptosystems.\\
In \cite{lemmermeyer2006introduction} the Pell analogous of 
RSA protocol, Diffie-Hellman key-exchange, and computing square roots modulo $n$,
is presented. 
Specifically,
as far as it concerns RSA, 
it is noticed that where RSA sends one encrypted
message $C$ about the size of $N$, the Pell version has to send twice as many bits per message, 
without having increased the security of the system.\\
In \cite{gysin1999use} RSA, Rabin and ElGamal variants of Pell's equation are presented proving that the proposed solutions are as secure as the original schemes. 
The proposed schemes are claimed to have also the same asymptotic complexity as the original schemes.\\
%
Three RSA-like schemes based on Pell's equation are presented in \cite{padhye2006public}. 
All three schemes have a faster decryption procedure than RSA.
The first two schemes are proved to be as secure as RSA in a one-to-one communication scenario and more secure in a broadcast scenario since they increase the complexity of some low exponent attacks such as \cite{hastad1986n}.
The third scheme is the Pell analogous of \cite{sakurai2002new} scheme. 
It is proved to be semantically secure (recall RSA is not)
and it is derived by randomizing the second scheme in a standard way described in \cite{sakurai2002new}.\\
An implementation of Scheme II of \cite{padhye2006public}
is analyzed in \cite{sarma2011public}. 
The scheme is implemented using GMP library \cite{gmp505}, and compared in its favor against a RSA implementation using the same library \cite{biswas2003fast}.\\
The use of R\'edei rational function for a cryptographic scheme is introduced in \cite{kameswari2012cryptosystems}. Though, the two schemes are inefficients with respect to RSA, since a larger modulo ($\mathcal{O}(n^2)$ instead of $\mathcal{O}(n)$) is used for the public exponent, and no isomorphism is exploited, making both encryption and decryption slower than RSA.\\
In \cite{segar2013pell} the advantages of key generation of a cryptosystem based on Pell's equation versus standard RSA (and some variants) key generation are discussed. The authors claim their key generation prevents Wiener attacks \cite{wiener1990cryptanalysis}.\\
A symmetric cryptographic scheme based on the Brahmagupta-Bh\~{a}skara equation
has been proposed in \cite{murthy2006cryptographic},
attacked in \cite{youssef2007comment}, corrected in the author's replay,
and re-attacked in \cite{alvarez2008known} with a known-plaintext attack.\\
A combination of Arnold Cat Map, Chaotic Map, 
and Brahmagupta-Bh\~{a}skara equation has also been used 
to build cryptosystems for image encryption and decryption 
(e.g., see \cite{rao2011vlsi}, \cite{thomasvlsi}).\\
Based on Pell's equation, in \cite{raoidentity}, a claimed Identity-Based Encryption scheme is presented. 
The efficiency of its implementation using GMP library \cite{gmp505} is analyzed in \cite{rao2015public}.\\
Also a Dynamic Threshold Multi Secret Sharing scheme \cite{rao2013model} 
and a Signature scheme \cite{mishra2014signpell} have been recently published.\\
Other RSA type cryptosystems based on elliptic curves exist, such as
\cite{koyama1995fast},
\cite{koyama1992new}, and
\cite{demytko1994new}. 
These schemes have a computationally more expansive addition operation compared to schemes based on Pell's equation, such as \cite{padhye2006public}.
%
%
%
%
%
%
%
%
%

\subsection{RSA-like schemes based on isomorphism}\label{sec:rsa_like}
In the following we use multiplicative notation for groups. 
In particular, for a group $\group$ with operation $\odot_{\group}$ and an element $G \in \group$, 
we use the following notation for the exponentiation with respect to the group operation
$$G^{\odot_{\group} x} = \underbrace{G \odot_{\group} \ldots \odot_{\group} G}_{x \text{ times }} \,.$$
In \cite{koyama1995fast}, the concept of RSA-type schemes based on the isomorphism between two groups 
has been introduced  in order to have a faster decryption with respect to the original RSA scheme.
Such a scheme can be obtained by finding an efficiently computable isomorphism 
$$\iso:(\group,\odot_{\group}) \to (\group',\odot_{\group'})$$ 
between two groups, 
in such a way that, in the decryption procedure,
the use of the exponentiation in  $\group'$ 
and the inverse of the isomorphism is more efficient than just applying the exponentiation in $\group$. 
The scheme is visualized in Figure \ref{fig:rsa_iso}.
The message $M$ is encrypted into the ciphertext $C'$, either
by first applying the exponentiation in $\group$ and then the isomorphism, or by first applying the isomorphism and then the exponentiation in $\group'$. 
The decryption follows the inverse process.\\
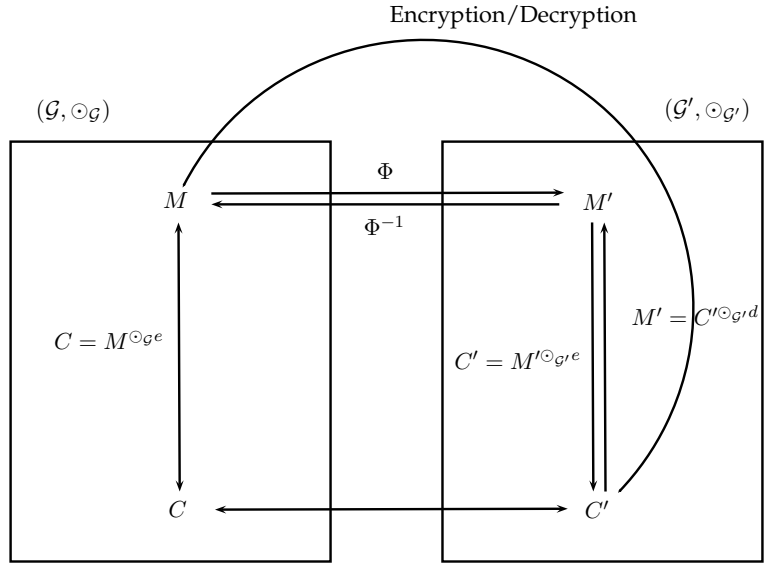
\begin{figure}
\scalebox{0.8} 
{
\begin{pspicture}(0,-4.232957)(14.569062,5.02048)
\psframe[linewidth=0.04,dimen=outer](6.23,2.7670429)(0.93,-4.232957)
\usefont{T1}{ppl}{m}{n}
\rput(3.6645312,1.7770429){$M$}
\usefont{T1}{ppl}{m}{n}
\rput(3.6945312,-3.342957){$C$}
\psline[linewidth=0.04cm,arrowsize=0.05291667cm 2.0,arrowlength=1.4,arrowinset=0.4]{<->}(3.73,-3.052957)(3.71,1.4070429)
\psframe[linewidth=0.04,dimen=outer](13.33,2.7670429)(8.03,-4.232957)
\usefont{T1}{ppl}{m}{n}
\rput(10.604531,1.7570429){$M'$}
\usefont{T1}{ppl}{m}{n}
\rput(10.594531,-3.342957){$C'$}
\psline[linewidth=0.04cm,arrowsize=0.05291667cm 2.0,arrowlength=1.4,arrowinset=0.4]{<-}(10.71,1.4070429)(10.73,-3.052957)
\psline[linewidth=0.04cm,arrowsize=0.05291667cm 2.0,arrowlength=1.4,arrowinset=0.4]{<-}(4.25,1.7070429)(9.97,1.7070429)
\psline[linewidth=0.04cm,arrowsize=0.05291667cm 2.0,arrowlength=1.4,arrowinset=0.4]{<-}(10.049961,1.9012433)(4.25,1.8870429)
\usefont{T1}{ppl}{m}{n}
\rput(7.1045313,2.257043){$\iso$}
\usefont{T1}{ppl}{m}{n}
\rput(7.094531,1.3570429){$\iso^{-1}$}
\usefont{T1}{ppl}{m}{n}
\rput(2.5645313,-0.5429571){$C=M^{\odot_{\group} e}$}
\usefont{T1}{ppl}{m}{n}
\rput(12.204531,-0.12295712){$M'=C'^{\odot_{\group'} d}$}
\usefont{T1}{ppl}{m}{n}
\rput(1.9845313,3.257043){$(\group,\odot_{\group})$}
\usefont{T1}{ppl}{m}{n}
\rput(12.4045315,3.2770429){$(\group',\odot_{\group'})$}
\rput{-116.00077}(11.132247,6.9560957){\psarc[linewidth=0.04,arrowsize=0.05291667cm .0,arrowlength=1.4,arrowinset=0.4]{<->}(7.739416,0.0){4.43269}{72.086815}{269.3873}}
\usefont{T1}{ppl}{m}{n}
\rput(9.211094,4.817043){Encryption/Decryption}
\psline[linewidth=0.04cm,arrowsize=0.05291667cm 2.0,arrowlength=1.4,arrowinset=0.4]{<-}(10.53,-3.052957)(10.51,1.4070429)
\usefont{T1}{ppl}{m}{n}
\rput(9.284532,-0.8629571){$C'=M'^{\odot_{\group'} e}$}
\psline[linewidth=0.04cm,arrowsize=0.05291667cm 2.0,arrowlength=1.4,arrowinset=0.4]{<->}(10.1099615,-3.3387568)(4.31,-3.352957)
\end{pspicture} 
}
\caption{RSA-like scheme based on isomorphism.}
\end{figure}\label{fig:rsa_iso}
Usually $\group$ is the set of points of a curve over the ring $\ZZ_N$  provided with a group law, $N$ the product of two primes, and $\group'$ is a subgroup of $\ZZ_N^*$.
The first proposed curve in \cite{koyama1995fast} was a singular cubic curve, while a particular set of conics was proposed in \cite{padhye2006public} providing a more efficient isomorphism inverse. In our work we propose another set of conics and we consider $\group'=\ZZ_N^*$ provided with a non-standard group law, yielding a more efficient isomorphism inverse.
%
%
\section{Product of points over conics} \label{sec:conic}
In this section, 
we use an irreducible polynomial of degree 2 to define a quotient field 
that induces a product over certain conics. 
This product gives a group structure to the conics. 
Then, we introduce a parametrization that allows us 
to define a bijection between a conic and a set of parameters,
yielding a group structure on this set. 
Finally, we determine the order of this group.
In Section \ref{sec:scheme} we use the bijection, which turns out to be an isomorphism,
to construct our RSA-like scheme.
\subsection{A group structure over conic curves}
Conics are the most simple non-linear curves and the Pell hyperbola defined by
$$\pell=\{(x,y)\in\mathbb R\times\mathbb R: x^2-Dy^2=1\},$$
where $D$ is a given positive integer (non-square), is one of the most popular, since it contains all the solutions of the famous Pell equation. A group structure can be defined over the Pell hyperbola by means of the following product between points in $\pell$:
$$(x_1,y_1)\cdot(x_2,y_2)=(x_1x_2+Dy_1y_2,y_1x_2+x_1y_2),$$
see, e.g., \cite{Veb} and \cite{Jac}.

Let $\mathbb K$ be an ordinary field, the previous product can be generalized in order to provide a group structure over the conics
$$\conic_\KK=\{(x,y)\in \mathbb K\times \mathbb K: x^2+Hxy-Dy^2=1\}.$$
When clear from the context we write $\conic$ instead of $\conic_\KK$, omitting the subscript.\\
Let $x^2-Hx-D$ be an irreducible polynomial over $\mathbb K[x]$ and let us consider the quotient field
$$\mathbb A=\mathbb K[x]/(x^2-Hx-D).$$
Given $p+q x$ and $r+s x$ in $\mathbb A$, the induced product is
$$(p+q x)(r+s x)=(pr+qs D)+(qr+ps+qs H)x\,.$$
We now want to define the conjugate over $\mathbb A$ 
of an element $p + q x$ with respect to $\mathbb K$.
Notice that $\mathbb A$ is an extension of degree 2 over $\mathbb K$
and so the minimal polynomial of $p+qx$ 
over $\mathbb K$ has degree 2.
Thus, we are looking for an element $r+sx$ such that
the sum and the product between this element and its conjugate $p+qx$
is in $\mathbb K$.
It is immediate to see that $s=-q$.
Moreover, since
$$(p+qx)(r-qx)=(pr-q^2D)+(rq-pq-q^2 H)x$$
we obtain that $r=p+qH$.
Thus, the conjugate $\overline{p+qx}$ of an element $p+qx$ is $(p+qH)-qx$.
Then, the norm of $p+qx$ over $\mathbb K$
is given by the product of the conjugates of $p+qx$, which is
$$(p+qx)(\overline{p+qx})=p^2+Hpq-Dq^2 \,.$$
The group of unitary elements of $\mathbb A^*=\mathbb A -\{0\}$ is
$$\mathcal U=\{p+q x\in\mathbb A^*:p^2+Hpq-Dq^2=1\}.$$
Consequently, there is a bijection between $\mathcal U$ and the conic $\conic$ which determines a commutative group structure over the conic by means of the product
\begin{equation}\label{prodE} (x,y)\odot_\conic(u,v)=(xu+yvD, yu+xv+yvH),\quad \forall (x,y), (u,v)\in \conic.  \end{equation}
\begin{prop}
$(\conic,\odot_\conic)$ is a commutative group with identity $(1,0)$ and the inverse of an element $(x,y)$ is $(x+Hy,-y)$.
\end{prop}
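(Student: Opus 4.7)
The plan is to transport the group structure from the set $\mathcal{U}$ of unit-norm elements of $\mathbb{A}$ onto $\conic$ via the bijection already implicit in the excerpt, and then read off the neutral element and the inverses from that correspondence.

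First I would make the bijection explicit: let $\iso: \conic \to \mathcal{U}$ be given by $(a,b) \mapsto a + bt$, where $t$ denotes the class of the indeterminate in $\mathbb{A} = \mathbb{K}[x]/(x^2 - Hx - D)$. By the definition of $\mathcal{U}$, the element $a + bt$ lies in $\mathcal{U}$ if and only if $a^2 + Hab - Db^2 = 1$, which is exactly the defining equation of $\conic$, so $\iso$ is well-defined and bijective. Next I would check that $\iso$ intertwines the operations: substituting $p = x, q = y, r = u, s = v$ into the induced product formula $(p+qt)(r+st) = (pr + qsD) + (qr + ps + qsH)t$ yields $(xu + yvD) + (yu + xv + yvH)t$, which agrees with the right-hand side of \eqref{prodE}. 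Hence $\odot_\conic$ is the pullback of the multiplication of $\mathbb{A}$ along $\iso$.

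Since $x^2 - Hx - D$ is irreducible over $\mathbb{K}$, the quotient $\mathbb{A}$ is a field, so $\mathbb{A}^*$ is a commutative group. The set $\mathcal{U}$ is the kernel of the norm map (shown in the preceding paragraph to be given by multiplication with the conjugate), and conjugation $p+qt \mapsto (p+qH) - qt$ is a ring automorphism of $\mathbb{A}$ fixing $\mathbb{K}$; multiplicativity of the norm is therefore immediate, and $\mathcal{U}$ is a subgroup of $\mathbb{A}^*$. Commutativity and associativity of $\odot_\conic$ are then inherited through $\iso$.

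To finish I would read off the distinguished elements. The identity of $\mathcal{U}$ is $1 = 1 + 0 \cdot t$, whose preimage under $\iso$ is $(1,0)$. For inverses, any $p + qt \in \mathcal{U}$ satisfies $(p+qt)\,\overline{(p+qt)} = p^2 + Hpq - Dq^2 = 1$, so its inverse in $\mathcal{U}$ is its conjugate $(p+qH) - qt$; pulling this back along $\iso^{-1}$ gives $(x+Hy, -y)$ as the inverse of $(x,y)$, as claimed.

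There is no real obstacle here beyond bookkeeping: the only non-trivial input is the multiplicativity of the norm, which is guaranteed by the fact that conjugation is an automorphism of $\mathbb{A}$, itself a consequence of the irreducibility hypothesis on $x^2 - Hx - D$. Everything else is a direct translation through the bijection $\iso$.
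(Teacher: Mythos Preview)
Your proof is correct and follows exactly the route the paper outlines in the paragraph immediately preceding the proposition: transport the group structure from the unit-norm subgroup $\mathcal U\subset\mathbb A^*$ to $\conic$ via the bijection $(a,b)\leftrightarrow a+bt$, then read off the identity and inverses from the identity $1\in\mathcal U$ and the conjugate formula. The paper leaves the proposition unproved, and your write-up simply fills in the details of that sketch.
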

\subsection{A group structure on the set of parameters}
We can use the following parametrization for the conic $\conic$:
$$y=\cfrac{1}{m}(x+1).$$
\begin{definition}\label{def:iso_k}
Let us define the set of parameters $\para_\KK=\KK \cup \{\alpha\}$, with $\alpha$ not in $\mathbb K$. From the parametrization we can derive the following bijection between $\conic$ and $\para_\KK$:
\begin{equation} 
\iso_{H,D}:
\begin{cases} 
\conic & \rightarrow \para_\KK \cr 
(x,y)  & \mapsto \cfrac{1+x}{y} \quad \forall (x,y)\in \conic, \quad y\not=0 \cr 
(1,0)  & \mapsto\alpha \cr
(-1,0) & \mapsto-\cfrac{H}{2} \ , 
\end{cases} 
\end{equation}
and
\begin{equation}\label{para} 
\iso^{-1}_{H,D}:
\begin{cases}
\para_\KK  & \rightarrow \conic \cr 
m      & \mapsto \left(\cfrac{m^2+D}{m^2+Hm-D}\ , \cfrac{2m+H}{m^2+Hm-D}\right)\quad \forall m \in \mathbb K \cr 
\alpha & \mapsto(1,0)\ , 
\end{cases}. 
\end{equation}
\end{definition}

For the seek of simplicity, we only write $\iso$ when there is no confusion.
This bijection allows us to derive from $\odot_\conic$ the following product over the set of parameters $\para_\KK$:
\begin{equation}
\label{prodP} 
\begin{cases} 
a\odot_{\para_\KK} b=\cfrac{D+ab}{H+a+b}, \quad & a+b\not=-H \cr 
a\odot_{\para_\KK} b=\alpha,              \quad & a+b=-H 
\end{cases}.  
\end{equation}

\begin{prop}
$(\para_\KK,\odot_{\para_\KK})$ is a commutative group with identity $\alpha$ and the inverse of an element $a$ is the element $b$ such that $a+b=-H$ and $\iso$ is an isomorphism between $\conic$ and $\para_\KK$.
\end{prop}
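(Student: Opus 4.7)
The plan is to observe that the product $\odot_{\para_\KK}$ defined in (\ref{prodP}) was obtained precisely as the transport of $\odot_\conic$ along the bijection $\iso$ of Definition \ref{def:iso_k}. Once I verify compatibility,
$$\iso\bigl((x_1,y_1)\odot_\conic (x_2,y_2)\bigr) = \iso(x_1,y_1)\odot_{\para_\KK}\iso(x_2,y_2),$$
all group axioms (associativity, commutativity, identity, inverses) transport automatically from $(\conic,\odot_\conic)$ — which is already a commutative group by the previous proposition — to $(\para_\KK,\odot_{\para_\KK})$, and $\iso$ is a group isomorphism by construction.

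First I would handle the generic case, where both points and their product have nonzero second coordinate. Writing $a_i=(1+x_i)/y_i$ and applying (\ref{prodE}), the two sides become
$$\frac{(1+x_1)(1+x_2)+Dy_1y_2}{Hy_1y_2+(1+x_1)y_2+(1+x_2)y_1} \quad \text{and} \quad \frac{1+x_1x_2+Dy_1y_2}{y_1x_2+x_1y_2+Hy_1y_2}.$$
Cross-multiplying and expanding, the difference of the two sides is a polynomial in $x_1,x_2,y_1,y_2,H,D$ that vanishes modulo the two conic relations $x_i^2+Hx_iy_i-Dy_i^2=1$. This is the one genuine calculation and the only place where real bookkeeping is needed; it should be doable by reducing each monomial of degree $\geq 2$ in a single $x_i$ using the corresponding conic equation.

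Next I would deal with the boundary cases. The point $(1,0)$ corresponds to $\alpha$; checking $(1,0)\odot_\conic(x,y)=(x,y)$ in (\ref{prodE}) gives that $\alpha$ is the identity on the $\para_\KK$ side (the convention in (\ref{prodP}) for the case $a+b=-H$ takes care of products that land on $(1,0)$). For the inverse, the preceding proposition gives $(x,y)^{-1}=(x+Hy,-y)$ in $\conic$; applying $\iso$ one computes $\iso(x+Hy,-y)=-(1+x+Hy)/y=-H-(1+x)/y=-H-\iso(x,y)$, so the inverse of $a\in\para_\KK\setminus\{\alpha\}$ is $-H-a$, matching the formula $a+b=-H$. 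The image of $(-1,0)$ under $\iso$ is fixed to be $-H/2$ precisely so that this rule remains consistent (it is its own inverse, corresponding to the $2$-torsion point $(-1,0)$).

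The main obstacle is therefore the single polynomial identity mentioned in the second paragraph; once it is checked, commutativity and associativity of $\odot_{\para_\KK}$ are immediate from those of $\odot_\conic$, the identity and inverse formulas are as claimed, and $\iso$ is the desired isomorphism.
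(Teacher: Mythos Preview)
Your approach is correct and is precisely the one the paper has in mind: the product $\odot_{\para_\KK}$ in (\ref{prodP}) is introduced as the transport of $\odot_\conic$ along the bijection $\iso$, so verifying the homomorphism identity in the generic case together with the boundary checks you list is exactly what is required. The paper itself does not write out a proof of this proposition, relying instead on the construction and on the cited reference for the details.
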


The above parametrization of the conic $\conic$ has been introduced in \cite{bcm}, where the authors used it in order to study approximations of irrationalities over conics.
The commutative group $(\para_\KK,\odot_{\para_\KK})$ can be also directly derived from $\mathbb A^*$ and it is possible to show that the quotient group $\mathcal B=\mathbb A^*/\mathbb K^*$ and $\para_\KK$ are isomorphic \cite{bcm}.\\
Let us observe that when $\mathbb K = \mathbb Z_p$, with $p$ prime, 
$\mathbb A$ is a finite field with $p^2$ elements, i.e., it is the Galois field $GF(p^2)$.
Moreover, $\mathcal B$ has order $(p^2-1)/(p-1)=p+1$ and consequently we have that $\para_{\ZZ_p}$ and $\conic_{\ZZ_p}$ are cyclic groups of order $p+1$. 
\begin{remark} \label{remark:fermat}
Thus, an analogous of the Fermat little theorem holds in $\para_{\ZZ_p}$ and $\conic_{\ZZ_p}$, i.e.,
$$m^{\odot_{\para_{\ZZ_p}} (p+2)}=m \pmod p,\quad \forall m\in \para_{\ZZ_p}$$
and
$$(x,y)^{\odot_{\conic_{\ZZ_p}} (p+2)}=(x,y) \pmod p,\quad \forall (x,y)\in \conic_{\ZZ_p},$$
where powers are performed using products $\odot_{\para_{\ZZ_p}}$ and $\odot_{\conic_{\ZZ_p}}$, respectively.
\end{remark}

\section{A public-key cryptosystem}\label{sec:scheme}
In this section we develop a RSA-like scheme 
exploiting the properties of the product $\odot_{\para_{\ZZ_p}}$ and the isomorphism $\iso$
redefined to make sense also when the conic is considered over the ring $\ZZ_N$. 
In particular, we will exploit the fact that 
an analogous of the Fermat little theorem holds in $\para_{\ZZ_p}$. 
Moreover, we see that powers with respect to this product can be evaluated in a fast way and 
the decryption operation in our scheme involves only one modular inverse, 
by means of the definition of $\iso$. 

\subsection{Preliminaries}
In order to insert a trapdoor in our scheme we would like to extend the ideas of Section \ref{sec:conic} to hold for the conic
$$\pell_{\ZZ_N}=\{(x,y)\in \mathbb Z_N\times \mathbb Z_N: x^2-Dy^2=1\pmod N\}\,.$$
where $\pm D\in\mathbb Z_N^*$ are quadratic non-residues modulo $N$ and $N=pq$ with $p,q$ prime numbers.
The condition on $D$ will ensure the isomorphism inverse to be well defined.
The condition on $-D$ will ensure the powers with respect to $\odot_{\para_{\ZZ_N}}$ to be well defined. 
To lighten the notation, in the remainder of the paper we omit the subscript $\ZZ_N$ from $\pell_{\ZZ_N}$ and $\para_{\ZZ_N}$ where there is no ambiguity.

Thus, we are working on a conic $\conic$ where $H=0$ and $\mathbb K=\mathbb Z_N$. Since $\ZZ_N$ is not a field we need to refine some of the definitions. 
In particular, in this case, 
the map $\iso_{0,D}:\pell\to\para$ of Definition \ref{def:iso_k} is not an isomorphism. 
However, considering
$$\pell^*=\{(x,y)\in \mathbb Z_N\times \mathbb Z_N^*: x^2-Dy^2=1\pmod N\}\,$$
we have that $\lvert \pell^* \rvert=\lvert \mathbb Z_N^* \rvert$ as proved in the following proposition. From now on we also omit the subscripts ${0,D}$ from $\iso_{0,D}$.
\begin{prop}
With the above notation, we have that
\begin{enumerate}
\item $\forall (x_1,y_1), (x_2,y_2)\in \pell^*$, $\iso(x_1,y_1)=\iso(x_2,y_2)\Leftrightarrow (x_1,y_1)=(x_2,y_2)$;
\item $\forall m_1, m_2\in \mathbb Z_N^*$, $\iso^{-1}(m_1)=\iso^{-1}(m_2)\Leftrightarrow m_1=m_2$;
\item $\forall m\in \mathbb Z_N^*$, we have $\iso^{-1}(m)\in \pell^*$ and $\forall (x,y)\in\pell^*$, we have $\iso(x,y)\in\mathbb Z_N^*$.
\end{enumerate}
\end{prop}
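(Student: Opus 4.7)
My plan is to derive all three items from a single slightly stronger fact: the formulas for $\iso$ and $\iso^{-1}$ in Definition~\ref{def:iso_k} (specialised to $H=0$ and $\mathbb{K}=\ZZ_N$) define mutually inverse bijections between $\pell^*$ and $\ZZ_N^*$. Once this is in hand, item~3 is exactly the well-definedness of these maps, and items~1 and~2 follow formally from the existence of a two-sided inverse.

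The only place the hypothesis on $D$ enters is in checking that the denominators appearing in the formulas are invertible in $\ZZ_N$. For $\iso^{-1}(m)$ with $m\in\ZZ_N^*$, the critical quantity is $m^2-D$: if $\gcd(m^2-D,N)>1$, then $m^2\equiv D\pmod{p}$ or $m^2\equiv D\pmod{q}$, contradicting that $D$ is a quadratic non-residue modulo each prime factor of $N$. Hence $m^2-D\in\ZZ_N^*$, and since $N=pq$ is odd and $m\in\ZZ_N^*$ also $2m\in\ZZ_N^*$, so the second coordinate of $\iso^{-1}(m)$ lies in $\ZZ_N^*$, placing $\iso^{-1}(m)$ in $\pell^*$ after verifying the conic equation by the identity $(m^2+D)^2-D(2m)^2=(m^2-D)^2$. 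For the other direction, given $(x,y)\in\pell^*$ I would show $1+x\in\ZZ_N^*$ by contradiction: if $p\mid 1+x$ then $x^2\equiv 1\pmod{p}$, so the defining relation $x^2-Dy^2\equiv 1\pmod{N}$ forces $Dy^2\equiv 0\pmod{p}$; since $D\in\ZZ_N^*$ implies $D\not\equiv 0\pmod{p}$, we obtain $p\mid y$, contradicting $y\in\ZZ_N^*$. This gives $\iso(x,y)=(1+x)y^{-1}\in\ZZ_N^*$ and completes item~3.

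With both maps well-defined, I would verify the two compositions by direct algebraic computation in the ring $\ZZ_N$. The calculation $\iso(\iso^{-1}(m))=m$ is essentially immediate after observing that $1+(m^2+D)(m^2-D)^{-1}=2m^2(m^2-D)^{-1}$. For $\iso^{-1}(\iso(x,y))=(x,y)$ one substitutes $m=(1+x)y^{-1}$ and uses the conic relation $Dy^2=x^2-1$ to collapse $(m^2+D)(m^2-D)^{-1}$ and $2m(m^2-D)^{-1}$ to $x$ and $y$ respectively. Items~1 and~2 are then restatements of the injectivity of $\iso$ and $\iso^{-1}$.

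I expect the work to be conceptually light: the hypothesis that $D$ is a quadratic non-residue modulo $N$ is used exactly once, to rule out $m^2\equiv D$, and the hypothesis on $-D$ plays no role here (it will be needed later, for the product $\odot_\para$). The only mild nuisance will be keeping careful track of invertibility modulo $p$, modulo $q$, and modulo $N$ simultaneously, and phrasing the identities that witness $\iso\circ\iso^{-1}=\mathrm{id}$ and $\iso^{-1}\circ\iso=\mathrm{id}$ using inverses in $\ZZ_N^*$ rather than as identities of rational functions.
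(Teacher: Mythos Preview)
Your approach is correct and in fact establishes more than the paper's proof. The paper treats the three items separately: for item~1 it cross-multiplies to $(1+x_1)y_2=(1+x_2)y_1$, squares, multiplies by $D$, and uses $Dy_i^2=x_i^2-1$ to eliminate the $y_i$'s, obtaining $(1+x_1)^2(x_2^2-1)=(1+x_2)^2(x_1^2-1)$ and hence $x_1=x_2$, $y_1=y_2$; items~2 and~3 are dismissed as ``direct calculations'' and ``straightforward'' from the explicit formulas. You instead prove the single stronger fact that the two formulas are mutually inverse bijections between $\pell^*$ and $\ZZ_N^*$, and read off both injectivity statements from that. This is cleaner and is exactly what the scheme actually needs later (the paper subsequently uses that $\iso$ and $\iso^{-1}$ are genuine inverses, something its proposition does not literally assert). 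Your verification that $1+x\in\ZZ_N^*$ for $(x,y)\in\pell^*$ fills a detail the paper's sketch of item~3 glosses over, and you are right that only the non-residuosity of $D$ (not of $-D$) modulo each prime factor of $N$ is used in this proposition.
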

\begin{proof}
\begin{enumerate}
\item We have that 
$$\iso(x_1,y_1)=\iso(x_2,y_2)\Leftrightarrow (1+x_1)y_2=(1+x_2)y_1.$$
Squaring both members and multiplying them by $D$, we obtain
$$D(1+x_1)^2y_2^2=D(1+x_2)^2y_1^2.$$
Since $(x_1,y_1), (x_2,y_2)\in \pell^*$, we get
$$(1+x_1)^2(x_2^2-1)=(1+x_2)^2(x_1^2-1)^2$$
from which it is easy to prove that $x_1=x_2$ and $y_1=y_2$.
\item We have that
$$\iso^{-1}(m_1)=\iso^{-1}(m_2)\Leftrightarrow \left( \cfrac{m_1^2+D}{m_1^2-D},\cfrac{2m_1}{m_1^2-D}\right)=\left(\cfrac{m_2^2+D}{m^2-D}, \cfrac{2m_2}{m_2^2-D} \right).$$
Direct calculations show that this equality holds only when $m_1=m_2$.
\item Using the explicit forms of $\iso$ and $\iso^{-1}$ the proof is straightforward.
\end{enumerate}
\end{proof}
We recall the definition of $\iso_{H,D}$ with $H=0$ where we consider it only between $\pell^*$ and $\mathbb Z_N^*$.
\begin{definition}
We consider the bijection $\iso$ between $\pell^*$ and $\mathbb Z_N^*$ as 
$$
\iso:
\begin{cases} 
\pell^* & \rightarrow \mathbb Z_N^* \cr 
(x,y)   & \mapsto \cfrac{1+x}{y}  
\end{cases}
$$
and its inverse
$$ 
\iso^{-1}:
\begin{cases}
\mathbb Z_N^*  & \rightarrow \pell^* \cr 
m      & \mapsto \left(\cfrac{m^2+D}{m^2-D}\ , \cfrac{2m}{m^2-D}\right)
\end{cases}.
$$
\end{definition}

The products over $\pell^*$ and $\mathbb Z_N^*$ still follow the same rules seen in the previous section, i.e.,
$$(x_1,y_1)\odot_{\pell}(x_2,y_2)=(x_1x_2+Dy_1y_2,y_1x_2+x_1y_2), \quad \forall (x_1,y_1), (x_2,y_2)\in \pell \,.$$
and
$$a\odot_\para b=\cfrac{D+ab}{a+b},\quad \forall a,b\in \mathbb Z_N^*,\quad a+b\in \mathbb Z_N^*. $$

As a consequence of Remark \ref{remark:fermat}, given $l = \lcm(p+1,q+1)$ and $r = 1 \pmod l$ we have
$$m^{\odot_\para r}=m \pmod N,\quad \forall m\in \mathbb Z_N^*$$
and
$$(x,y)^{\odot_{\pell} r}=(x,y) \pmod N,\quad \forall (x,y)\in \pell_N^*,$$
where powers are performed using products $\odot_\para$ and $\odot_{\pell}$, respectively.
%
\begin{remark}\label{rem:redei_exp_complexity}
In \cite{bcm2} the authors prove that powers with respect to the product $\odot_\para$
can be evaluated by the R\'{e}dei rational functions \cite{Redei}. 
These functions arise from the development of $(z+\sqrt{D})^n$,
where $z$ is an integer and $D$ is a non-square positive integer. 
One can write
\begin{equation*}
\label{pow}(z+\sqrt{D})^n=A_n(D,z)+B_n(D,z)\sqrt{D},
\end{equation*}
where
$$ A_n(D,z)=\sum_{k=0}^{[n/2]}\binom{n}{2k}D^kz^{n-2k}, \quad B_n(D,z)=\sum_{k=0}^{[n/2]}\binom{n}{2k+1}d^kz^{n-2k-1}.$$
These polynomials can be also determined by
$$M^n=\begin{pmatrix} A_n(D,z) & DB_n(d,z) \cr B_n(D,z) & A_n(D,z) \end{pmatrix}$$
with 
$$M=\begin{pmatrix} z & D \cr 1 & z \end{pmatrix}.$$
The R\'{e}dei rational functions $Q_n(D,z)$ are defined by 
$$Q_n(D,z)=\cfrac{A_n(D,z)}{B_n(D,z)}, \quad \forall n\geq1.$$
The powers with respect to the product $\odot_\para$ can be evaluated as follows:
$$z^{\odot_\para n}=\underbrace{z\odot_\para ... \odot_\para z}_{\text{$n$ times}}=Q_n(D,z).$$
In this way, the exponentiation in $\para$ can be performed efficiently. 
Indeed, in \cite{More} the author exhibits an algorithm  of complexity $O(log_2(n))$ with respect to addition, subtraction and multiplication to evaluate R\'{e}dei rational functions $Q_n(D,z)$ over a ring. 
\end{remark}
\subsection{The scheme}
In this section, we explicitly describe the key generation, the encryption and the decryption algorithms. 
The following steps show the key generation:
\begin{itemize}
	\item choose two prime numbers $p, q$ and compute $N=pq$;
	\item choose an integer $e$ such that $\gcd(e,\lcm(p+1)(q+1))=1$. \\
	      The pair $(N,e)$ is called the \emph{public} or \emph{encryption key};
	\item evaluate $d=e^{-1}\pmod{\lcm(p+1)(q+1)}$. \\
              The triple $(p,q,d)$ is called the \emph{secret} or \emph{decryption key}.
\end{itemize}
Now, let us consider the set
$$\mathbb{\tilde Z}_N^*=
\left\{
\cfrac{m^2+D}{m^2-D}: \forall m,D\in\mathbb Z_N^*, \pm D \ \text{quadratic non-residues modulo} \ N
\right\}.$$
Let us suppose that we want to encrypt two messages $(M_x,M_y)\in\mathbb{\tilde Z}_N^* \times \mathbb Z_N^*$. The following steps describe the encryption algorithm:
\begin{itemize}
\item compute $D=\cfrac{M_x^2-1}{M_y^2}\pmod N$, so that $(M_x,M_y)\in\pell_N^*$;
\item compute $M=\iso(M_x,M_y) = \cfrac{M_x+1}{M_y}\pmod N$;
\item compute the ciphertext $C=M^{\odot_{\para} e}\pmod N=Q_e(D,M) \pmod N$.
\end{itemize}
Once $(C,D)$ are sent to the receiver,
the decryption algorithm is described as follows:
\begin{itemize}
\item compute $C^{\odot_{\para} d}\pmod N=M$;
\item retrieve the plaintexts $(M_x,M_y)$ by means of $\iso^{-1}$, i.e. \\
      $(M_x,M_y) = \iso^{-1}(M) = \left(\cfrac{M^2+D}{M^2-D}, \cfrac{2M}{M^2-D}\right)\pmod N$. 
\end{itemize}

\subsection{Some remarks}
In the previous section, we have introduced an RSA-like scheme based on the Pell equation. Our scheme has some important differences with respect to other similar schemes.

First of all, we have considered Pell's equation $x^2-Dy^2=1$ in $\mathbb Z_N$, where $D$ is not a quadratic residue, while in \cite{padhye2006public} the author worked on the complementary case, i.e., when $D$ is a quadratic residue.\\
Moreover, we have used a particular parametrization which allowed us to define an original product $\odot_{\para_{\ZZ_P}}$ connected to R\'{e}dei rational functions. 
We have seen that in this case an analogous of the Fermat little theorem holds, i.e.,
$$m^{\odot_{\para_{\ZZ_P}} p+2}=m\pmod p,\quad \forall m\in\para_{\ZZ_p},\quad p \ \text{prime},$$
when $\mathbb K$ is a finite field. 
For this reason, the decryption key is computed modulo $\lcm(p+1)(q+1)$, 
while in RSA schemes (and also in \cite{padhye2006public}) it is computed modulo $\lcm(p-1)(q-1)$.
Furthermore, it is known that R\'{e}dei rational functions can be exploited for the construction of cryptographic systems following the Dickson scheme (see, \cite{Nob} and \cite{Mul}). 
However, in these schemes the decryption key is computed modulo $\lcm(p^2-1)(q^2-1)$, 
which is much less efficient.\\
Finally, in the previous section we have seen that we can send message pairs in $\mathbb{\tilde Z}_N^* \times \mathbb Z_N^*$. 
We can observe that also in other RSA-like schemes that exploit the Pell equation we are not able to encrypt all message pairs in $\mathbb Z_N^*\times\mathbb Z_N^*$. 
In the next example we see a message pair that can be encrypted using our scheme, 
but that can not be encrypted with the analogous scheme developed in \cite{padhye2006public}.
%
\begin{example}
Let us consider $p=11$, $q=13$, and $N=143$. We choose $e=5$ as the public exponent and consequently $d=5^{-1} \pmod{168}=101$ is the secret one. 

Let us suppose that we want to send the message pair $(M_x,M_y)=(83,135)\in\mathbb{\tilde Z}_N^*\times \mathbb Z_N^*$. We obtain $D=\cfrac{M_x^2-1}{M_x}\pmod N=54$ and $M=\iso(M_x,M_y)\pmod N=61$. Now we encrypt $M$ evaluating
$$61^{\odot_\para 5}\pmod N=38.$$
If we want to retrieve the original message we compute
$$38^{\odot_\para 101}\pmod N =61$$
and
$$\iso^{-1}(61)\pmod N=(83,135).$$
\end{example}
\begin{example}
 If we try to use the scheme II in \cite{padhye2006public} for encrypting the message pair $(M_x,M_y)=(83,135)$ we have to compute
 $$Z_1=M_xM_y\pmod N=51,\quad Z_1^{-1}\pmod N=129$$
 and to solve the system
 $$\begin{cases} X-aY=Z_1\pmod N \cr X+aY=Z_1^{-1}\pmod N \end{cases}$$
 where $Y=M_y$ and the unknowns are $X$ and $a$. 
 In this way, we get
 $$X=98,\quad a=11.$$
 We can observe that in this case $a$ is not invertible in $\mathbb Z_N$ and consequently we can not compute the value $a^{-1}$ which is necessary for the decryption. 
 However, this is a very rare situation when $N$ is large, and 
 finding a non-invertible element of $\ZZ_N$ is equivalent to factorize $N$ itself.
\end{example}
%
\section{Security of the proposed scheme}\label{sec:security}
In this section we first prove that our scheme offers the same security as RSA 
in a one-to-one communication scenario. 
Then we provide evidence that our scheme is secure against partially known plaintext attacks,
and against some attacks of which RSA is vulnerable in a broadcast scenario. 
We conclude with a comment on semantic security.
\subsection{Security reduction to RSA}\label{sec:security_reduction}
We show, with a standard reduction argument, that breaking our scheme is equivalent to breaking RSA scheme.
\begin{theorem}
The following sentences are equivalent
\begin{enumerate}
 \item \label{itm:iso} There exists a probabilistic polynomial time algorithm $A1$ such that for all $C,D \in \ZZ_N^*$,  if $C = \iso(M_x,M_y)^{\odot_\para e}$ and $D = (M_x^2-1)^2/M_y^2 \pmod N$, then $A1(C,D,e,N) = (M_x,M_y)$, where $D$ quadratic non-residue modulo $N$, $(M_x,M_y)\in \tilde{\ZZ}_N \times \ZZ_N$.
 \item \label{itm:rsa} There exists a probabilistic polynomial time algorithm $A2$ such that for all $M\in \ZZ_N^*$, if $C = M^e \pmod N$, then $A2(C,e,N)=M$.
\end{enumerate}
\end{theorem}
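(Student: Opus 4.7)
I would prove the equivalence as two separate black-box reductions, in both cases using the quadratic extension $\mathbb A = \ZZ_N[\sqrt D]$ as the bridge between $\odot_\para$-exponentiation and ordinary modular exponentiation. The key identity I would use is the following: if $M = \iso(M_x,M_y)$ and $(M+\sqrt D)^e = A_e + B_e\sqrt D$ in $\mathbb A$, then, because $C = Q_e(D,M) = A_e/B_e$ by Remark~\ref{rem:redei_exp_complexity} and the definition of $\odot_\para$-exponentiation, one has
\begin{equation*}
(M+\sqrt D)^e \;=\; B_e\,(C+\sqrt D) \quad\text{in }\mathbb A.
\end{equation*}
Taking norms on both sides converts this identity into a purely scalar relation modulo $N$, namely $(M^2-D)^e \equiv B_e^{\,2}(C^2-D)\pmod N$, which is of exactly the shape the RSA oracle is designed to handle.

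For the direction $(\ref{itm:rsa}) \Rightarrow (\ref{itm:iso})$, my plan is to feed the scalar relation above into $A2$. Given $(C,D,e,N)$, the quantity $C^2-D$ is computable, and the relation says that the norm $M^2-D$ of the unknown is a specific $e$-th root in $\ZZ_N^*$ up to the factor $B_e^2$. After stripping $B_e$ off (it can be read from the ratio $A_e/B_e = C$ once $A_e$, and hence $(M+\sqrt D)^e$, is recovered from a suitable auxiliary RSA call on $C^2-D$), one obtains $M^2-D$, then $M$, and finally $(M_x,M_y) = \iso^{-1}(M)$ through the closed-form expression of Definition~2.

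For the direction $(\ref{itm:iso}) \Rightarrow (\ref{itm:rsa})$, the reduction goes the other way: given an RSA challenge $C_{\mathrm{RSA}} = M^e \bmod N$ I would choose at random a $D$ with $\pm D$ quadratic non-residues modulo $N$ (testing via Jacobi symbols together with rejection sampling) and use the same identity to build a scheme ciphertext whose underlying plaintext is $M$ itself. Concretely, by manipulating $(M+\sqrt D)^e = B_e(C_{\mathrm{scheme}}+\sqrt D)$ one can compute the pair $(C_{\mathrm{scheme}}, D)$ that encrypts $M$ in the scheme using only $C_{\mathrm{RSA}}$; feeding $(C_{\mathrm{scheme}}, D, e, N)$ to $A1$ returns $(M_x, M_y)$ and $M = \iso(M_x, M_y) = (1+M_x)/M_y \pmod N$, which is the desired RSA plaintext.

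The main obstacle in both directions is that $\odot_\para$ is not ordinary modular multiplication: an oracle inverting one type of exponentiation does not, on its face, invert the other. The way around it is the detour through $\mathbb A^* / \ZZ_N^*$, exploiting the fact that the norm $\mathbb A^* \to \ZZ_N^*$ is a group homomorphism whose kernel is precisely the norm-one subgroup that the scheme lives on. The delicate point I expect to have to nail down carefully is the elimination of the unknown scalar $B_e$ appearing in the translation: $B_e$ is a polynomial in $M$ and cannot be computed outright without the factorization of $N$, so it must be cancelled (or recovered via an additional oracle call) by using the simultaneous pair $(A_e,B_e)$ of relations that the identity above provides together with its conjugate.
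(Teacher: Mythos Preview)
Your approach diverges sharply from the paper's. The paper's reduction is a two–line black-box swap: for $(\ref{itm:iso})\Rightarrow(\ref{itm:rsa})$ it takes the RSA ciphertext $C$, picks a random non-residue $D$, calls $A1(C,D,e,N)$ to obtain $(M_x,M_y)$, and outputs $\iso(M_x,M_y)$; for $(\ref{itm:rsa})\Rightarrow(\ref{itm:iso})$ it feeds the scheme ciphertext $C$ directly to $A2$ and then applies $\iso^{-1}$. No passage through $\mathbb A$, no norms, no $B_e$.

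You are right to be uneasy about that: the paper silently identifies $M^{e}\bmod N$ with $M^{\odot_\para e}$, and these are different maps (already for $e=2$ one has $M^{\odot_\para 2}=(M^{2}+D)/(2M)\neq M^{2}$). So in $(\ref{itm:iso})\Rightarrow(\ref{itm:rsa})$ the call $A1(C,D,e,N)$ on an RSA ciphertext $C=M^{e}$ returns a pair whose $\iso$-image $M'$ satisfies $M'^{\,\odot_\para e}=C$, not $M'^{\,e}=C$; the paper offers no argument that $M'=M$. The same mismatch occurs in the other direction. The issue you flag is genuine, and the paper's proof does not address it.

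That said, your own proposal does not close the gap either. In both directions the unknown scalar $B_e=B_e(D,M)$ has to be eliminated, and you only name this as ``the delicate point'' without carrying it out. Concretely: in $(\ref{itm:rsa})\Rightarrow(\ref{itm:iso})$ the norm identity gives $(M^{2}-D)^{e}\equiv B_e^{\,2}(C^{2}-D)$, but a call $A2(C^{2}-D,e,N)$ returns the $e$-th root of $C^{2}-D$, not of $B_e^{\,2}(C^{2}-D)$, and your suggestion to recover $A_e$ ``from a suitable auxiliary RSA call on $C^{2}-D$'' is circular since $A_e=C\,B_e$ with $B_e$ still unknown. In $(\ref{itm:iso})\Rightarrow(\ref{itm:rsa})$ you need to manufacture $C_{\mathrm{scheme}}=A_e/B_e$ from $C_{\mathrm{RSA}}=M^{e}$ alone, yet $A_e,B_e$ are degree-$e$ polynomials in $M$ with mixed monomials, and no mechanism is given for evaluating their ratio from $M^{e}$. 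Until that elimination is made explicit, the reduction is no more complete than the paper's.
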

\begin{proof}
First assume \ref{itm:iso} and prove \ref{itm:rsa}.\\
Given $C,e,N$ we want to compute $M$ using $A1$. \\
Choose $D$ random quadratic non-residue modulo $N$. \\
Compute $(M_x,M_y)=A1(C,D,e,N)$. \\
Compute $M=\iso(M_x,M_y) = \cfrac{M_x+1}{M_y}\pmod N$.\\
Now assume \ref{itm:rsa} and prove \ref{itm:iso}.\\
Given $C,D,e,N$ we want to compute $(M_x,M_y)$ using $A2$.\\
Compute $A2(C,e,N) = M$.\\
Compute $(M_x,M_y) = \iso^{-1}(M) = 
\left(\cfrac{M^2+D}{M^2-D}, \cfrac{2M}{M^2-D}\right)\pmod N$.
\\
This completes the proof.
\end{proof}
\subsection{Partially known plaintext attack}\label{sec:partial_security}
Consider the case where the attacker is provided with the one of the two plaintext coordinates. Since $D = \frac{M_x^2 - 1}{M_y^2} \pmod N$, 
the attacker has to solve the quadratic congruence $M_x^2 - D M_y^2 - 1 = 0 \pmod N$
with respect to either $M_x$ or $M_y$.
It is well known that computing square roots modulo a composite integer $N$,
when the square root exists, 
is equivalent to factoring $N$ itself.
\subsection{Broadcast applications}\label{sec:broadcast_security}
As many other RSA-like schemes, 
such as \cite{koyama1992new}, \cite{koyama1995fast}, \cite{padhye2006public}, 
our scheme is more secure than the original RSA scheme 
when considering broadcast applications.
In such a scenario the plaintext $M$ is encrypted for $r$ receivers 
using the same public exponent $e$ and different moduli $N_i$ for each receiver, 
with $i=1,\ldots,r$. 
In \cite{hastad1986n} it is shown that,
even if the messages are public linear combinations of each other, if $r>e$
it is possible to recover $M$ by solving 
a set of $r$ congruences of polynomials of degree $e$. 
However, as shown for example in \cite{koyama1992new} and \cite{koyama1995fast}, 
this kind of attacks fails when the trapdoor function 
is not a simple monomial power as in RSA.
This allows the use of smaller exponents $e$ even in broadcast scenarios.
%
%
\subsection{Semantic security}\label{sec:semantic_security}
Our scheme can be easily transformed into a semantically secure scheme 
using standard techniques 
which introduce randomness in the process of generating the ciphertext. 
An example of these techniques is given in \cite{padhye2006public}, 
when transforming scheme II into scheme III.
\section{Efficiency comparison with RSA-like schemes}\label{sec:efficiency}
%
In this section we take into consideration the most efficient and secure
published RSA-like scheme based on Pell's equation, i.e. \cite{padhye2006public}, 
and make a comparison with our proposed scheme.

Recall that the isomorphism used in the schemes of \cite{padhye2006public} is given by
\begin{align*}
 \iso: 
 \begin{cases}
  \pell_N & \to \ZZ_N^* \\
  (x,y) & \mapsto x-Dy \pmod N  
 \end{cases}
\end{align*}
while the isomorphism used in \cite{koyama1995fast},
where $\ec = \{(x,y) \in \ZZ_N^* \colon y^2+axy = x^3 \pmod N\}$, is
\begin{align*}
 \iso: 
 \begin{cases}
  \ec & \to \ZZ_N^* \\
  (x,y) & \mapsto 1+\frac{ax}{y} \pmod N = \frac{x^3}{y^2} \pmod N
 \end{cases}
\end{align*}
Both inverses are provided in Table \ref{tab:dec_cost}. In practice the public exponent $e$ is usually chosen small and with an efficient binary representation, i.e. $e=2^{2^i}+1,i=1,2,3,4$. 
Thus the exponentiation $x^e$ performed during the encryption 
is usually much faster than $x^d$, which is done during decryption.
For this reason our comparison concerns only decryption time.

Notice also that all three schemes in \cite{padhye2006public} can only encrypt messages of size $2\log N$, and thus a fair comparison with RSA is done if considering encryption messages of the same size (rather than $\log N$-bit plaintext), which means RSA exponentiation must be applied two times in parallel, 
both during encryption and decryption.

In Table \ref{tab:dec_cost}, we indicate with M the modular multiplication (including squaring), with I the modular inverse and with E$^y_\group$ the modular exponentiation $x^y$ in the group $\group$. 
Notice that, by the observation in Remark \ref{rem:redei_exp_complexity}, both the operations E$^{d}_{\ZZ_N}$ and E$^{d}_{\para}$ can be performed in $\mathcal{O}(\log_2 N)$ with respect to addition, subtraction and multiplication.

In Table \ref{tab:dec_cost} we report the decryption costs of RSA, the first two schemes of \cite{padhye2006public}, the second scheme of \cite{koyama1995fast}, and our scheme.
\begin{table}
\begin{center}
\begin{tabular}{llll}
 & Decryption costs & Ciphertext size & $\iso^{-1}(x)$\\
\hline
RSA & 2E$^{d}_{\ZZ_N}$ & 2$\log N$ & -\\
\cite{padhye2006public}-I & 1E$^{d}_{\ZZ_N}$+3M+3I & 3$\log N$ & 
 $\left( \frac{x^{-1}+x}{2}, \frac{x^{-1}-x}{2\sqrt{D}} \right)$\\
\cite{padhye2006public}-II & 1E$^{d}_{\ZZ_N}$+2M+3I & 2$\log N$ & 
 $\left( \frac{x^{-1}+x}{2}, \frac{x^{-1}-x}{2\sqrt{D}} \right)$\\
\cite{koyama1995fast}-II & 1E$^{d}_{\ZZ_N}$+6M+2I & 2$\log N$ & 
 $\left( \frac{a^2x}{(x-1)^2}, \frac{a^3x}{(x-1)^3} \right)$\\
Our scheme & 1E$^{d}_{\para}$+3M+1I & 2$\log N$ & 
 $\left( \frac{x^2+D}{x^2-D}, \frac{2x}{x^2-D} \right)$
\end{tabular}
\end{center}
\caption{Comparison of decryption costs for a $(2\log N)$-bit plaintext.}
\label{tab:dec_cost}
\end{table}

RSA is the only scheme requiring two exponentiations. All other schemes require one exponentiation and one application of the isomorphism inverse (\cite{padhye2006public}-I requires also the application of the isomorphism). Notice that our isomorphism has the lowest number of inversions, i.e. one, and only one more multiplication than \cite{padhye2006public}-II, yielding the fastest isomorphism inverse.\\
%
%
%
%
%


\section{Conclusions}
We have presented an original RSA-like scheme based on 
an isomorphism between a conic and the set $\ZZ_N^*$ with a non-standard product.
The scheme is complementary to the fastest original RSA-like scheme based on conics,
and furthermore it has a faster isomorphism inverse operation.
We have also proved that our solution provides a two times faster decryption operation 
than the original RSA scheme,
while offering the same security in a one-to-one communication and
more security in broadcast applications.
%
%
\bibliographystyle{amsalpha}
\bibliography{critto2}
%
%
%
%
%
%
%

\end{document}